\documentclass[a4paper]{article}

\setlength{\textheight}{\paperheight}
\setlength{\topmargin}{4.6truemm}
\addtolength{\topmargin}{-\headheight}
\addtolength{\topmargin}{-\headsep}
\addtolength{\textheight}{-60truemm}
\setlength{\oddsidemargin}{0cm}
\setlength{\textwidth}{16cm}

\newtheorem{theorem}{Theorem}
\newtheorem{lemma}{Lemma}

\newtheorem{Definition}{Definition}
\newtheorem{proof}{Proof}

\usepackage[draft,mode=multiuser]{fixme}
\fxuseenvlayout{colorsig}
\fxusetargetlayout{color}
\FXRegisterAuthor{if}{aif}{\color{blue}IF}
\usepackage[symbol]{footmisc}

\usepackage{url}
\usepackage{xspace}
\usepackage{mathtools}
\usepackage{amsfonts}
\usepackage{latexsym}
\usepackage{algorithm}
\usepackage{algpseudocode}

\newcommand{\ord}{\mathcal{O}}
\newcommand{\rp}{RePair\xspace}
\newcommand{\rpps}{RePair(PS)\xspace}
\newcommand{\mrrp}{MR-RePair\xspace}
\newcommand{\rlmrrp}{RL-MR-RePair\xspace}

\newcommand{\occ}[2]{\#\mathrm{occ}_{#1}({#2})}

\begin{document}

\title{\Large Practical Repetition-Aware Grammar Compression}
\author{
Isamu~Furuya$^{\dagger}$\\[0.5em]
{\small\begin{minipage}{\linewidth}\begin{center}\begin{tabular}{c}
	$^{\dagger}$ Graduate School of IST, Hokkaido University, Japan\\
	\url{furuya@ist.hokudai.ac.jp}\\
	\end{tabular}\end{center}\end{minipage}}
}

\date{}

\maketitle

\begin{abstract} \small\baselineskip=9pt 
	The goal of grammar compression is 
	to construct a small sized context free grammar which uniquely generates the input text data.
	Among grammar compression methods,
	\rp is known for its good practical compression performance.
	\mrrp was recently proposed as an improvement to \rp for 
	constructing small-sized context free grammar
	for repetitive text data.
	However, a compact encoding scheme has not been discussed for \mrrp.
	We propose a practical encoding method for \mrrp
	and show its effectiveness through comparative experiments.
	Moreover, we extend \mrrp to run-length context free grammar
	and design a novel variant for it called \rlmrrp.
	We experimentally demonstrate that a compression scheme consisting of \rlmrrp and the proposed encoding method
	show good performance on real repetitive datasets.
\end{abstract}

\section{Introduction}\label{sec:introduction}

Grammar compression is a method of lossless data compression 
that reduces the size of a given text 
by constructing a small-sized context free grammar (CFG) uniquely derived from the input text.
While the problem of generating the smallest such grammar is NP-hard~\cite{Charikar+:IEEE2005},
several approximation techniques have been proposed.
Among them, \rp~\cite{Larsson+:IEEE2000} is known as an off-line method 
that can practically achieve a high compression ratio~\cite{Claude+:TWEB2010,Conzalez+:CPM2007,Wan:phD2003}
despite its simple scheme.
Considerable study has been conducted on the subject of \rp, such as
extending it to an online algorithm~\cite{Masaki+:DCC2016},
practical working time/space improvements~\cite{Bille+:DCC2017,Sekine+:DCC2014},
applications to other fields~\cite{Claude+:TWEB2010,Lohrey+:INFSYST2013,Tabei+:KDD2016},
and theoretical analysis on the generated grammar sizes~\cite{Charikar+:IEEE2005,Navarro+:DCC2008,Ochoa+:IEEE2019}.

Recently, Furuya et al.~\cite{Furuya+:DCC2019}
proposed a variant of \rp called \mrrp,
which they claimed practically achieves more efficient compression than \rp
in terms of the size of constructed grammars for repetitive datasets.
Generating grammar of small sizes has considerable importance
since there are several algorithms and data structures that are applied to grammar compressed texts,
with running times that depend on the sizes of the grammar
~\cite{Gawrychowski:SPIRE2012,Goto+:JDA2013,Bille+:TCS2014,I+:TCS2015,Jez:TA2015}.
At the same time, 
it is also important to
encode these grammars as compact bit sequences
since compressed data are stored as bit sequences.
However, the authors in~\cite{Furuya+:DCC2019} did not discuss encoding methods for constructed grammar.
Related to \rp,
succinct encoding of straight-line program (SLP) was addressed in~\cite{Tabei+:CPM2013}
(note that grammars constructed by \rp are easily transformed to SLPs).
In addition,
Bille et al.~\cite{Bille+:DCC2017,Prezza:github} proposed a variant of \rp and an effective method for practically encoding it.
Without effective methods for encoding \mrrp,
the final bit sequence of its grammar may be larger than that of \rp,
even if the size of grammar for \mrrp is smaller than that of \rp.
In this work,
we propose a compact bit encoding method for \mrrp
and experimentally demonstrate its effectiveness and the proposed encoding method.

Run-length CFG (RLCFG) is an extension of CFG,
used by Je\.{z}~\cite{Jez:TCS2015} 
but formally introduced by Nishimoto et al.~\cite{Nishimoto+:MFCS2016}.
RLCFG improves the effectiveness of CFG compression in theory;
its theoretical properties were studied in~\cite{Bille+:JDA2018,Gagie+:SODA2018}.
However, to the best of our knowledge,
the experimental efficiency of the compression using RLCFG is undiscovered
since there are no compression algorithms on RLCFG scheme.
In this study,
we extend \mrrp and
propose a compression algorithm on RLCFG, called \rlmrrp.
We report that
experiments show that
\rlmrrp constructed smaller grammars for repetitive datasets
compared with both \rp and \mrrp.
Moreover,
we applied the proposed encoding method to \rlmrrp and experimentally show its effectiveness on
real repetitive datasets.

\noindent
{\bf Contributions:} The primary contributions of this study are as follows.
\begin{enumerate}
	\item
		We extended \mrrp using the RLCFG scheme and
		designed a compression algorithm, called \rlmrrp.
	\item 
		We proposed an encoding scheme for \mrrp and \rlmrrp.
	\item
		We implemented \rp, \mrrp and \rlmrrp and experimentally confirmed 
		that \rlmrrp produces smaller grammars than the others in nearly all instances.
		Moreover, 
		we implemented 8 encoding methods for \rp and
		6 encoding methods for \mrrp and \rlmrrp and
		showed the effectiveness of their compression
		through comparative experiments.
\end{enumerate}

The remainder of this paper is organized as follows.
In Section~\ref{sec:preliminaries}, 
we provide baseic notations and review some definitions and previous algorithms.
In Section~\ref{sec:grammar}, 
we define the \rlmrrp algorithm.
In addition, we describe its implementation and analyze the time/space complexity.
In Section~\ref{sec:encoding},
we introduce some encoding scheme for grammar compression
and present our bit encoding method.
In Section~\ref{sec:experiments},
we show experimental results.
Finally, we conclude the study in Section~\ref{sec:conclusion}.

\section{Preliminaries}\label{sec:preliminaries}

In this section, we provide some notations and definitions to be used in the following sections.
In addition, 
we review grammar compression and some basic encoding methods.

\subsection{Definitions and basic notation}

Let $\mathcal{A}$ denote an {\em alphabet}, an ordered finite set of symbols.
A {\em text} $T = t_1\cdots t_n$ is an element of $\mathcal{A}^{\ast}$,
and $|T| = n$ denotes its length.
For two natural numbers $i$ and $j$ such that $1 \le i \le j \le n$,
a {\em substring} of $T$ is $t_i \cdots t_j$, which can be denoted by $T[i..j]$.
If $i=j$; for simplicity, we denote it by $T[i]$ (this is the $i$-th symbol of $T$).

For text $T$ and $u$, $\occ{T}{u}$ denotes the {\em frequency} of $u$ in $T$,
representing the number of occurrence of $u$ in $T$.
Assume that $\occ{T}{u}\ge 1$,
then a {\em left} (or {\em right}) {\em extension} of $u$ is 
any substring of $T$ that has the form $wu$ (or $uw$), where $w \in \mathcal{A}^{+}$.
We say that $u$ is {\em left} (or {\em right}) {\em maximal}
if left (or right) extensions of $u$ occur strictly fewer times in $T$ than in $u$,
or $u$ has no left (or right) extension.
If $u$ is left and right maximal and $\occ{T}{u}\ge 2$, we call $u$ a {\em maximal repeat}.
% For example, substring \texttt{abra} of $T=$\texttt{abracadabra} is a maximal repeat,
% while \texttt{br} is not.
Moreover, a {\em conditional maximal repeat} is the longest substring $u$ of a maximal repeat 
such that $u[1]\neq u[|u|]$. 
In this study, we consider only strings with a length greater than $1$ as maximal repeats.
For a maximal repeat $u$ with length 2 and $u[1]=u[2]$,
we regards $u$ itself as its conditional maximal repeat.

A {\em repetition} is a text that has the form $w^k$ with $w\in\mathcal{A}^{+}$ and $k\in\mathbb{N}^{+}$,
which means $k$ repetitions of $w$.
A {\em run} is a repetition that satisfies both of the following two conditions:
(i) $w\in\mathcal{A}$, and
(ii) any of its left and right extensions are not repetition,
or it has no left and right extension.

\subsection{Grammar compression}

A {\em context free grammar} (CFG) $G$ is
a 4-tuple $G = \{\Sigma, V, s, R\}$,
where $\Sigma$ is an ordered finite set of {\em terminals},
$V$ is an ordered finite set of {\em variables} such that $V$ and $\Sigma$ are disjoint,
$R$ is a finite set of binary relations called {\em production rules} (or {\em rules})
between $V$ and $(\Sigma \cup V)^{\ast}$,
and $s\in V$ is a special variable called the {\em start symbol}.
A production rule represents an expansion rule of a variable,
written in the form $v \rightarrow \alpha$,
where $v \in V$ and $\alpha \in (\Sigma \cup V)^{\ast}$
imply that $v$ expands to $\alpha$.
Here, the word {\em symbol} refers to any element of $(\Sigma \cup V)$;
we call an element of $(\Sigma \cup V)^{\ast}$ a text.
A {\em straight-line program (SLP)}~\cite{Karpinski+:NJC1997} is a canonical form of CFG 
such that all the length of the right-hand side of the rules is 2.

A {\em run-length context free grammar} (RLCFG)~\cite{Jez:TCS2015,Nishimoto+:MFCS2016} is an extension of CFG
by adding {\em run-length rules} to production rules, 
written in the form $v \rightarrow \alpha^{k}$
with $\alpha\in(\Sigma\cup V)$ and $k\ge 1$.
We call both CFG and RLCFG {\em grammar} and
say that a grammar {\em generates} a text $T$
if its start symbol $s$ expands to $T$ by recursively applying its production rules.

The {\em parse tree} of grammar is a rooted ordered tree 
with internal nodes labeled by variables and leaves labeled by terminals
such that 
the relation between internal node $v_i$ and its children corresponds to the rule $v_i\rightarrow\alpha_i$,
that is, if $\alpha_i = v_{i_1}, v_{i_2}, \cdots, v_{i_j}$ with $j = |\alpha_i|$,
children of $v_i$ are nodes labeled by $v_{i_1}, v_{i_2}, \cdots, v_{i_j}$ from left to right.
Note that the label sequence of the leaves of the parse tree represents the text generated by the grammar.

Given a text $T$, 
{\em grammar compression} is a method of lossless compression
that constructs a grammar $G$
such that $G$ generates $T$ uniquely.
For $G$,
we assume that
there is a unique rule $v_i\rightarrow \alpha_i$ for each variable $v_i\in V$,
where $\alpha_i$ is
either $\alpha_i = a$ ($a \in \Sigma$) 
or $\alpha_i = v_{j_1}v_{j_2}\cdots v_{j_m}$ ($i > j_k$ for all $1\le k \le m$)
or $\alpha_i = {v_j}^k$.

For each form of rule,
its size is defined as follows;
(i) for $v\rightarrow a$, its size is 1,
(ii) for $v\rightarrow v_{j_1}v_{j_2}\cdots v_{j_m}$, its size is $m$, and
(iii) for $v\rightarrow {v_j}^k$, its size is 3.
We estimate the effectiveness of compression by the size of a generated grammar,
which is the total size of its production rules.

\subsection{Previous algorithms}

{\em \rp} is an algorithm for grammar compression
proposed by Larsson and Moffat~\cite{Larsson+:IEEE2000}.
For a given text $T$, \rp constructs CFG $\{\Sigma, V, s, R\}$
by a recursive procedure:
(i) find the most frequent symbol pair $p$ from $T$, then
(ii) replace $p$ with a new variable $v$ then add $v$ to $V$ and $v\rightarrow p$ to $R$.
It terminates when the frequency of the most frequent pair is less than 2,
and adds $s\rightarrow T$ to $R$ at that time.
In general, at first,
\rp replaces each $a\in\Sigma$ with a new variable $v_a$,
and adds $v_a$ to $V$ and $v_a\rightarrow a$ to $R$.

{\em \mrrp} is a variant of \rp proposed by Furuya et al.~\cite{Furuya+:DCC2019},
that finds and replaces the most frequent conditional maximal repeat, 
instead of the most frequent pair.

\begin{lemma}[\cite{Furuya+:DCC2019}]\label{lem:mrrp}
	For a given text with length $n$,
	let us denote the grammar constructed by \mrrp by $\{\Sigma,V,s,R\}$.
	Then,
	\mrrp works in $\ord (n)$ expected time and 
	$5n + 4|\Sigma|^2 + 4V + \lceil \sqrt{n + 1} \rceil - 1$ words of space.
\end{lemma}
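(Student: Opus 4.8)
The plan is to reconstruct the linear-expected-time analysis of \rp due to Larsson and Moffat~\cite{Larsson+:IEEE2000} and to track the two points at which \mrrp departs from it: the unit of replacement is a conditional maximal repeat rather than a pair, so replaced blocks may be longer than $2$, and the algorithm must first locate such a block. First I fix the data structures. The current text is held as a doubly linked list over $\Sigma\cup V$ with one record per position; following the standard \rp layout each record stores a constant number of fields (its symbol, forward and backward links, and auxiliary fields used to thread the occurrence list of a pair and to mark a cell as deleted), giving $5n$ words. Pairs are kept in a hash table mapping each pair present in the current text to its frequency and to the head of the linked list of its occurrences, and pair frequencies are organized in a bucketed priority queue. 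Initialization — creating $v_a$ with rule $v_a\to a$ for each $a\in\Sigma$ and then scanning once to count all pairs — takes $\ord(n)$ time; the pair table at that moment ranges over $\Sigma\times\Sigma$, which I realize as a $|\Sigma|\times|\Sigma|$ array of constant-size records costing $4|\Sigma|^2$ words (pairs created later, which involve variables, are threaded through the $5n$-word sequence structure and contribute nothing asymptotically).

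The heart of the argument is the replacement loop. Here I would first prove the combinatorial fact that the most frequent conditional maximal repeat shares its core pair with a most frequent pair: given a most frequent pair $q$, extend an occurrence of $q$ leftward and rightward as long as all current occurrences are preceded, respectively followed, by a single common symbol; this produces the longest string $v$ containing $q$ with $\occ{T}{v}=\occ{T}{q}$, which is a maximal repeat, and trimming $v$ so that its endpoints differ yields a conditional maximal repeat $u$ with $\occ{T}{u}\ge\occ{T}{v}=\occ{T}{q}$. Since every pair internal to a conditional maximal repeat occurs at least as often as that repeat, the globally most frequent conditional maximal repeat has frequency at most $\occ{T}{q}$; comparing with $u$ forces $\occ{T}{u}=\occ{T}{q}$, so $u$ is a most frequent conditional maximal repeat and \mrrp needs no data structure beyond those of \rp. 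The loop then pops $q$, walks the list to delimit $u$ (of length $\ell$) by rescanning the occurrence list of $q$ once per extended position, at cost $\ord(\ell\cdot\occ{T}{u})$; replaces each of the $\occ{T}{u}$ occurrences of $u$ by a fresh variable $v$ at cost $\ord(\ell)$ apiece; adds $v\to u$ to $R$; and fixes the two boundary pairs of each replaced occurrence with $\ord(1)$ hash-table operations. The main obstacle is the amortized accounting. Each replaced occurrence deletes $\ell-1$ cells and a cell is deleted at most once, so $\sum\occ{T}{u}\,(\ell-1)\le n$ over the entire execution; because $\ell\ge2$ this also bounds $\sum\occ{T}{u}\,\ell=\ord(n)$, which dominates the extension work, the splicing work, and (since each iteration uses $\ord(\occ{T}{u})$ hash operations) the total number of hash operations. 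With each hash operation costing $\ord(1)$ in expectation and the bucketed priority queue re-verified to support all of its operations in $\ord(1)$ amortized time, the algorithm runs in $\ord(n)$ expected time.

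It remains to total the space by summing the four contributions. The sequence structure costs $5n$ words and the initial pair table $4|\Sigma|^2$ words, as above. Storing the constructed grammar costs $4|V|$ words: a constant number of header fields per rule, with the combined length of the right-hand sides $\alpha_i$ absorbed into this bound by the same telescoping argument that controls the running time. For the priority queue I would invoke the Larsson--Moffat trick: although pair frequencies lie in $\{2,\dots,n\}$, it suffices to maintain explicit buckets only for frequencies below $\lceil\sqrt{n+1}\,\rceil$ and to lump all larger frequencies into a single overflow bucket that is scanned linearly whenever it must be consulted, which keeps the queue to $\lceil\sqrt{n+1}\,\rceil-1$ words while preserving the $\ord(1)$-amortized operations used above. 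Adding the four terms yields $5n+4|\Sigma|^2+4|V|+\lceil\sqrt{n+1}\,\rceil-1$ words, as claimed.
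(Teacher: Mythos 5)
This lemma is not proved in the paper at all: it is imported verbatim from the cited work \cite{Furuya+:DCC2019}, so there is no in-paper argument to compare against, and what you have written is a from-scratch reconstruction of that reference's analysis. On the time bound your route is essentially the right one and matches the cited argument in spirit: Larsson--Moffat's data structures for \rp, the observation that a most frequent (conditional) maximal repeat can be obtained by greedily extending a most frequent pair (so \mrrp needs no machinery beyond \rp's), and the telescoping bound $\sum \occ{T}{u}\,(\ell-1)\le n$ because each sequence cell is deleted at most once, which with $\ell\ge 2$ gives $\sum \occ{T}{u}\,\ell=\ord(n)$ and hence $\ord(n)$ expected time under $\ord(1)$ expected-time hashing and the $\sqrt{n}$-bucket priority queue. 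You do gloss over the overlapping-occurrence issue for runs such as $x^k$, where frequencies must be counted non-overlapping for the extension argument to behave monotonically, but that is handled the same way as in the cited papers and is a minor omission.

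The genuine gap is in the space accounting, which is exactly where the content of the lemma lies, since it asserts specific word constants rather than an $\ord(\cdot)$ bound. Your claim that the constructed grammar costs $4V$ words ``with the combined length of the right-hand sides $\alpha_i$ absorbed into this bound by the same telescoping argument'' does not hold as stated: telescoping bounds the total right-hand-side length by roughly $n/2+d$, an $\ord(n)$ quantity, not an $\ord(|V|)$ one, and the final rule $s\rightarrow\tau$ alone can have $|\tau|=\Theta(n)$ while $|V|=|\Sigma|+1$ (take a text with no repeated pair). In the accounting this lemma inherits from Larsson and Moffat, the right-hand sides and the residual sequence live in the arrays already charged to the $5n$ term (cells freed by replacements are reused), while $4V$ covers constant per-variable bookkeeping; your derivation therefore charges the same storage to the wrong terms. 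More broadly, the exact constants $5n$, $4|\Sigma|^2$, $4V$, $\lceil\sqrt{n+1}\rceil-1$ are properties of the particular implementation in \cite{Furuya+:DCC2019} and cannot be certified by the generic ``constant number of fields per record'' counts you posit; as written, your argument establishes $\ord(n)$ expected time and $\ord(n+|\Sigma|^2)$ words, but not the stated word-exact bound.
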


\subsection{Encoding methods}

For a given text,
{\em $i$-bit encoding} is an encoding method
that represents each symbol of the text by $i$ bits.
{\em Fixed bit length encoding (FBLE)} 
represents each symbol by
$\lceil\log{m}\rceil$ bits,
where $m$ is the value of the maximum symbol of the text.

{\em Huffman coding}~\cite{Huffman:IEEE1952} is 
a popular algorithm for compact bit encoding of text.
For a given text,
Huffman coding assigns a variable number of bits to symbols of the text
based on their frequencies,
and represents each symbol by the assigned number of bits.

{\em Gamma encoding} (also known as Elias gamma encoding)~\cite{Elias:IEEE1975} is
an encoding scheme for positive integers.
To encode a given number $n$, gamma encoding 
puts $\lfloor\log{n}\rfloor$ 0s and appends the binary form of $n$.

{\em Run-length encoding (RLE)} converts a given text to two sequences;
{\em symbol sequence} $S$ and {\em length sequence} $L$.
Assuming that the given text is $r_1 r_2 \cdots r_q$, where $r_i = {a_i}^{k_i}$ 
with $a_i\in\mathcal{A}$ and $k_i\in\mathbb{N}^{+}$ for $1\le i\le q$.
Then, the obtained $S$ and $L$ are denoted $a_1 a_2 \cdots a_q$ and
$k_1 k_2 \cdots k_q$, respectively.

\section{\rlmrrp}\label{sec:grammar}

Furuya et al.~\cite{Furuya+:DCC2019} recently proposed \mrrp as a variant of \rp based on maximal repeats.
They reported that \mrrp practically improves
the efficiency of compression of \rp with respect to grammar size, 
especially for repetitive data.
In this section, 
we extend \mrrp to run-length grammar compression schemes
and present a new variant of \rp called \rlmrrp.

\subsection{Algorithm}

Let $x$ be a symbol and $k$ be a natural number such that $k \ge 2$.
For text $x^k$, its most frequent maximal repeat is $x^2$.
Conversely, for a given text $T$, if its most frequent maximal repeat is $x^2$,
there is a possibility that $T$ contains long repetition $x^k$.
\rlmrrp searches run $x^k$ in $T$ and replaces it
if the most frequent maximal repeat is $x^2$.
Otherwise, it works similar to \mrrp.
We show the \rlmrrp algorithm in Algorithm~\ref{alg:rlmrrp}.
\begin{algorithm}[tb]
	\caption{\rlmrrp}\label{alg:rlmrrp}
	\begin{algorithmic}[1]
		\Require $T$
		\Ensure $G = \{V,\Sigma,s,R\}$
		\State Replace each $a \in \Sigma$ in $T$ with a new variable $v_a$, then add $v_a$ to $V$ and $v_a\rightarrow a$ to $R$.
		\Loop
		\State Find the most frequent maximal repeat $r$.
		\If{$\occ{T}{r}<2$}
		\State Add $s\rightarrow T$ to $R$.
		\State \Return $G$
		\EndIf
		\If{$r=x^2$ with variable $x$}
		\State Replace each run $x^k$ with a new variable $v_k$, \label{line:replace} then add $v_k$ to $V$ and $v_k\rightarrow x^k$ to $R$.
		\Else
		\If{$|r|>2$ and $r[1]=r[|r|]$}
		\State $r\leftarrow r[1..|r|-1]$
		\EndIf
		\State Replace each $r$ in $T$ with a new variable $v$, then add $v$ to $V$ and $v\rightarrow r$ to $R$.
		\EndIf
		\EndLoop
	\end{algorithmic}
\end{algorithm}

Let $g$ and $g_{\mathit{rl}}$ be the size of the grammar constructed by \mrrp and \rlmrrp, respectively.
It is easy to show that there is a case where $g_{\mathit{rl}}<g$,
since $g=\Theta(\log{n})$ and $g_{\mathit{rl}}=\Theta(1)$ holds for unary input text with length $n$.
However, note that $g_{\mathit{rl}}>g$ holds in some cases.
Assuming that the input text is $a^{2^1}ba^{2^2}b\cdots a^{2^m}b$,
then clearly, $g=\Theta(\log{m})$ and $g_{\mathit{rl}}=\Theta(m)$.

\subsection{Implementation}

We implement \rlmrrp by adding extra hash to the implementation of \mrrp.
The extra hash is used in the replacement phase (c.f.~Line~\ref{line:replace} in Algorithm~\ref{alg:rlmrrp})
to check whether the same run has occurred previously.
If the run has already occurred,
the same variable is used for the replacement.
Otherwise, a new variable is required.

\begin{theorem}\label{thm:time}
	\rlmrrp works in $\ord(n)$ expected time,
	where $n$ is the length of the input text.
\end{theorem}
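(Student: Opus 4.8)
The plan is to build on Lemma~\ref{lem:mrrp}, which already guarantees that \mrrp runs in $\ord(n)$ expected time, and to argue that the single structural change introduced by \rlmrrp---detecting a maximal repeat of the form $x^2$ and replacing all runs $x^k$ in one sweep instead of a single pair---adds only $\ord(n)$ expected work in total across the whole execution. The key observation is an amortization argument: each symbol of $T$ can be ``consumed'' by a run-length replacement at most a bounded number of times over the entire run of the algorithm, so the extra bookkeeping cannot exceed $\ord(n)$ summed over all iterations.

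First I would make the cost of one run-length replacement step precise. In the iteration where the most frequent maximal repeat is $x^2$, the algorithm scans the current text, and whenever it encounters a maximal run $x^k$ (with $k\ge 2$), it looks up $x^k$ in the extra hash table described in the Implementation subsection; if $k$ has not been seen for $x$ before, it allocates a fresh variable $v_k$ with $v_k\rightarrow x^k$, and in either case replaces that occurrence of $x^k$ by $v_k$. Using hashing, each individual lookup/insert is $\ord(1)$ expected time, and each run of length $k$ is processed in $\ord(k)$ time (or $\ord(1)$ amortized if runs are maintained explicitly, as in the \mrrp implementation). So the cost of handling a single run is proportional to its length, i.e.\ to the number of text positions it covers.

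Next I would set up the amortization. Let $n_j$ denote the length of the working text at the start of iteration $j$. In a \rp/\mrrp-style algorithm every non-trivial replacement strictly shortens the working text (a run $x^k$ with $k\ge 2$ is replaced by a single symbol, removing at least one position; a conditional-maximal-repeat replacement of length $\ge 2$ likewise removes at least one position per occurrence, of which there are $\ge 2$), so $n_{j+1}\le n_j-1$ and the number of iterations is $\ord(n)$. For the run-length iterations specifically, the scan work in iteration $j$ is $\ord(n_j)$; but more sharply, the total number of text positions deleted over all iterations is at most $n$, because the text only ever shrinks and never grows. A run-length iteration that touches runs covering a total of $m$ positions deletes at least (number of runs touched) positions, and the scanning cost is $\ord(m)$; I would charge the $\ord(1)$ per-position cost of the scan partly to the deleted positions and partly to the surviving symbols, then observe that each symbol currently in the text is scanned at most once per iteration and the number of iterations between two successive shortenings caused by any fixed symbol is controlled. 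The cleanest route: since Lemma~\ref{lem:mrrp} already charges $\ord(n)$ total to \mrrp's scanning/replacement machinery, and the run-length branch performs the same kind of left-to-right scan-and-replace over the current text, I would show the run-length branch's total extra cost is dominated by the same $\ord(n)$ bound by the identical accounting, the only new ingredient being the $\ord(1)$ expected hash operations, which are charged one per run processed, hence at most one per deleted position, hence $\ord(n)$ in total.

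\textbf{The main obstacle} I anticipate is making the amortization airtight for the run-length branch rather than hand-waving ``it's like \mrrp.'' In \rp and \mrrp one charges the work of an iteration against the pairs/occurrences actually replaced, and a standard argument (each original position participates in $\ord(1)$ replacements before disappearing) yields linearity; but in \rlmrrp a single iteration can, in principle, consume a long run $x^k$ in one shot, and one must check that this does not let the algorithm ``pay'' for scanning the same long stretch many times across different iterations. The resolution is that once $x^k$ is collapsed to $v_k$, those positions are gone, so a given maximal run is scanned-and-replaced at most once; combined with the fact that the working text is non-increasing in length, the total scanning cost over all run-length iterations telescopes to $\ord(n)$. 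I would also need a sentence confirming that introducing the variables $v_k$ does not blow up $|\Sigma\cup V|$ beyond $\ord(n)$ (immediate, since each $v_k$ is created only when a run is collapsed, removing positions), so that the hash tables stay of size $\ord(n)$ and hashing remains $\ord(1)$ expected. Putting these pieces together gives the claimed $\ord(n)$ expected running time.
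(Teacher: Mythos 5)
Your proposal is correct and follows essentially the same route as the paper: it reduces to Lemma~\ref{lem:mrrp} by observing that the only new work in \rlmrrp is the per-run check in the extra hash, which costs $\ord(1)$ expected time per run (at most one per replacement), so the overall bound of \mrrp carries over. The paper's own proof states exactly this in two sentences; your amortization over deleted positions merely spells out in more detail why the run-length scan-and-replace is covered by the same accounting already used for \mrrp.
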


\begin{proof}
	Compared with \mrrp,
	\rlmrrp requires an additional operation 
	when it replaces each run with a new variable,
	that is, it checks whether the same run has occurred previously.
	We assume that the extra hash works in $\ord(1)$ expected time;
	thus, the time complexity of \rlmrrp is equal to that of \mrrp,
	that is, by Lemma~\ref{lem:mrrp}, \rlmrrp works in $\ord(n)$ expected time.
\end{proof}

\begin{theorem}\label{thm:space}
	For a given text with length $n$,
	let us denote the grammar constructed by \rlmrrp by $\{\Sigma,V,s,R\}$.
	Then,
	\rlmrrp works in
	$6n + 4|\Sigma|^2 + 4V + \lceil \sqrt{n + 1} \rceil - 1$ words of space.
\end{theorem}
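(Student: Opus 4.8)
The plan is to piggyback on the space analysis of \mrrp given in Lemma~\ref{lem:mrrp}, which already accounts for $5n + 4|\Sigma|^2 + 4V + \lceil \sqrt{n+1} \rceil - 1$ words, and to show that the only additional data structure \rlmrrp uses — the extra hash described in the implementation subsection — costs at most $n$ further words. First I would recall precisely which structures contribute to the \mrrp bound: the working text array, the linked-list / priority structures over symbol pairs, the table of size $\ord(|\Sigma|^2)$ indexed by pairs of terminals, and the space for the variables and rules. None of these change in \rlmrrp except that the replacement step now handles runs $x^k$ rather than only pairs $x^2$; crucially, replacing a run shortens the text by at least as much as replacing a pair would, so the working array and all structures bounded in terms of the current text length never exceed their \mrrp sizes.

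Next I would analyze the extra hash. Its purpose is to map a previously seen run (identified by its repeated symbol $x$ and its exponent $k$, or equivalently by the position/length of the run) to the variable already created for it, so that identical runs reuse the same variable. The key observation is that the number of distinct runs replaced over the whole execution is bounded by the number of run-replacement steps, and each such step consumes at least two symbols of the working text; more simply, at any moment the number of entries in the hash is at most the number of variables created so far, which is at most $n$ (the text length strictly decreases with each replacement, and each replacement is counted by at most one hash entry). Hence the extra hash occupies $\ord(n)$ words, and by choosing the hash with the same constant factor used for the other $\ord(n)$ structures in the \mrrp implementation, its contribution is at most $n$ words. Adding this single extra $n$ to the \mrrp bound yields $6n + 4|\Sigma|^2 + 4V + \lceil \sqrt{n+1} \rceil - 1$ words, as claimed.

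The main obstacle I anticipate is pinning down the exact constant for the extra hash so that it contributes precisely $n$ rather than some larger multiple of $n$: one must argue that the keys stored are small (a symbol fits in one word, an exponent $k \le n$ fits in one word) and that at most $n$ such entries are ever live, and then appeal to whatever open-addressing or chaining convention the \mrrp implementation already uses for its $5n$ term so that the bookkeeping is uniform. A secondary point to check carefully is that run detection itself — scanning left and right from an occurrence of $x^2$ to find the maximal run $x^k$ — reuses the existing text-scanning machinery and needs no persistent auxiliary storage beyond $\ord(1)$ words, so it does not inflate the bound. Once these two points are dispatched, the statement follows by simply substituting the augmented count into Lemma~\ref{lem:mrrp}.
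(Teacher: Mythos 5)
Your proposal is correct and follows essentially the same route as the paper: invoke Lemma~\ref{lem:mrrp} for the $5n + 4|\Sigma|^2 + 4V + \lceil\sqrt{n+1}\rceil - 1$ term and charge at most $n$ additional words to the extra hash that records the runs. The only (cosmetic) difference is in how the $n$-word bound on the hash is justified --- the paper bounds it by the total length of the runs occurring in the text, whereas you bound the number of live entries by the number of variables created --- but both arguments land on the same constant and the same final count.
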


\begin{proof}
	Compared with \mrrp,
	\rlmrrp requires an additional space for the extra hash,
	which maintains the length of the runs that occur in the text.
	The total length of such runs is at most $n$;
	thus, the hash is at most $n$ words of space.
	Therefore, by Lemma~\ref{lem:mrrp}, 
	\rlmrrp works in 
	$6n + 4|\Sigma|^2 + 4V + \lceil \sqrt{n + 1} \rceil - 1$ words of space.
\end{proof}

\section{Bit encoding}\label{sec:encoding}

Let $G = \{\Sigma,V,s,R\}$ be a grammar constructed by \rp, \mrrp, or \rlmrrp,
where $\Sigma = \{a_1,\cdots,a_{\sigma}\}$,
$V = \{1,\cdots,(\sigma+d+1)\}$, $s=(\sigma+d+1)$, and
$R = \{1\rightarrow a_1,\cdots,\sigma\rightarrow a_{\sigma}, 
(\sigma+1)\rightarrow\alpha_1,\cdots,(\sigma+d)\rightarrow\alpha_{d},
(\sigma+d+1)\rightarrow \tau\}$.
In what follows,
for each run-length rule $v_i\rightarrow {v_j}^k$,
we write its right-hand side in the form of a symbol sequence $0kv_j$,
where $0$ is a special symbol
that implies that the expression is the right-hand-side of a run-length rule.
By this representation,
we treat RLCFG as CFG in what follows.

Compressed data is finally stored as bit sequences.
The simplest method for encoding $G$ is
by converting $G$ to a text and encoding the text using general text encoding schemes,
such as $i$-bit encoding, FBLE, or Huffman coding
(e.g., a \rp implementation by Navarro~\cite{Navarro:impl} uses 32-bit encoding).
For example, 
we can convert $G$ to
$a_1 \cdots a_{\sigma} \diamond \alpha_1 \diamond \alpha_2 \diamond \cdots \diamond \alpha_{d} \diamond \tau$
with a special symbol $\diamond$, called a {\em delimiter}.
Let $g$ denote the size of $G$,
and then the length of such text is 
$\sigma+1+\sum_i^d(|\alpha_i|+1)+|\tau| = g+d+1$.
Here, if the length of each $\alpha_i$ is 2,
the number of delimiters can be reduced, that is,
we can convert $G$ to
$a_1 \cdots a_{\sigma} \diamond \alpha_1 \alpha_2 \cdots \alpha_{d} \diamond \tau$
with length 
$\sigma+1+\sum_i^d|\alpha_i|+1+|\tau| = g+2$.
% since it is clear that the length of each $\alpha_i$ is 2.
This implies that the final bit sequence of grammar for \rp can be smaller than
that of \mrrp or \rlmrrp,
even if the size of the grammar for \rp is larger than that of \mrrp or \rlmrrp.

\subsection{A previous effective method for \rp}

In another approach,
Bille et al.~\cite{Bille+:DCC2017,Prezza:github} proposed a variant of \rp and an effective encoding for it.
They partially sorted the rules of grammar and encoded the grammar by using
{\em packed gamma encoding (PGE)}, defined as follows.

\begin{Definition}[PGE]\label{def:PGE}
	Given a text $T$ and a natural number $\varepsilon$.
	Let $D$ be a sequence such that
	$\lceil\log{l_1}\rceil \lceil\log{l_2}\rceil$ $\cdots \lceil\log{l_q}\rceil$,
	where $l_i$ is the value of the maximum symbol
	in $T[j..j+\varepsilon-1]$ with $j=\varepsilon(i-1)$
	and $q=\lfloor|T|/\varepsilon\rfloor$,
	$D_{\mathit{delta}}$ be a sequence such that
	$D_{\mathit{delta}}[1] = D[1] + 1$ and $D_{\mathit{delta}}[i] = |D[i]-D[i-1]| + 1$ for $1 < i \le q$, and
	$D_{\mathit{pms}}$ be a bit sequence such that
	$D_{\mathit{pms}}[1] = 1$ and for $1 < i \le q$ if $D[i] \ge D[i-1]$
	then $D_{\mathit{pms}}[i]$ is 1 and otherwise 0.
	$S_1$ and $L_1$ denote the symbol sequence and the length sequence obtained by RLE of $D_{\mathit{delta}}$, respectively.
	Similarly, $S_2$ and $L_2$ denote those obtained by RLE of $L_1$.
	Then,
	{\em PGE} is an encoding method that
	encodes $T$ to a bit sequence consisting of the following five bit sequences.
	\begin{enumerate}
		\item A gamma-encoded bit sequence of $S_1$.
		\item A gamma-encoded bit sequence of $S_2$.
		\item A gamma-encoded bit sequence of $L_2$.
		\item A bit sequence obtained from $T$ by representing each symbol $T[i]$ 
			by $D[\lfloor i/\varepsilon\rfloor]$ bits for $1\le i\le |T|$.
		\item $D_{\mathit{pms}}$.
	\end{enumerate}
\end{Definition}
It is expected that PGE performs well for a type of texts
such that the values of their symbols are close to that of their adjacent symbols.
Bille et al.~\cite{Bille+:DCC2017,Prezza:github} applied PGE to \rp in the following manner;
(i) construct two texts $X$ and $X_{\mathit{delta}}$ such that
$X = \max(\alpha_i[1], \alpha_i[2])$ and
$X_{\mathit{delta}} = |\alpha_i[1] - \alpha_i[2]|$
(note that the length of each $\alpha_i$ is 2 in \rp),
(ii) construct a bit sequence $X_{\mathit{pms}}$ such that
if $X[i]$ is $\alpha_i[1]$ then $X_{\mathit{pms}}[i]$ is 1 and otherwise 0,
(iii) store $X_{\mathit{delta}}$, $X_{\mathit{pms}}$, and PGE encoded bit sequence of $X$.
This encoding scheme requires that the length of each $\alpha_i$ is 2;
thus, it can be applied it neither to \mrrp nor \rlmrrp.

\subsection{Encoding via post-order partial parse tree (POPPT)}\label{sec:POPPT}

A {\em partial parse tree}~\cite{Rytter:TCS2003} is an ordered tree
formed by traversing the parse tree in the depth-first manner and
pruning out all descendants under each node of variables appearing no less than twice.
A {\em POPPT}~\cite{Maruyama+:SPIRE2013} is 
a partial parse tree whose internal nodes contain post-order variables. 
A {\em post-order CFG (POCFG)}~\cite{Maruyama+:SPIRE2013} is a CFG whose partial parse tree is a POPPT.
For compact encoding of general grammars (which implies not only CFGs with $|\alpha_i|=2$ for each $i$),
succinct representation of POCFG is useful.

Takabatake et al.~\cite{Takabatake+:SPIRE2014}%
\footnote{In the definition in~\cite{Takabatake+:SPIRE2014}, 
they use a second bit sequence to mark the leaves of the partial parse tree
to use it as the rank/select dictionary.
But we omit it because we do not use the bit sequence in our method.
}
presented a method for encoding a POCFG to its succinct representation
comprising a bit sequence $B$ and a text $U$.
For a given POCFG,
$P$ denotes its partial parse tree (which is a POPPT).
Then,
$B$ is built by traversing $P$ in the post-order and 
putting $c$ 0s and one 1 for a node with $c$ children.
Finally, we put one $0$ in $B$ to represent the super node.
$U$ stores symbols of leaves of $P$ left to right.

In the previous study undertaken by Takabatake et al.~\cite{Takabatake+:SPIRE2014},
Maruyama et al.~\cite{Maruyama+:SPIRE2013} proposed
a similar encoding method for SLPs.
The method constructs a bit sequence $B$ and a text $U$, in a similar manner,
but $B$ is slightly more effective.
$B$ is built by traversing the POPPT in the post-order, and
putting one 0 in $B$ if the node is a leaf, or one 1 otherwise.
Note that we can easily apply this method to the constructed grammar by \rp with
decomposition $\tau$ to 
$\{(\sigma+d+1)\rightarrow\tau[1]\tau[2],(\sigma+d+2)\rightarrow(\sigma+d+1)\tau[3],\cdots,
(\sigma+d+|\tau|-1)\rightarrow(\sigma+d+|\tau|-2)\tau[|\tau|]\}$.
By this decomposition, the size of the grammar increases by $|\tau|$.
However, it does not affect the final representation
since the variables $(\sigma+d+1),\cdots,(\sigma+d+|\tau|-1)$ do not explicitly appear
and only symbols that occur in $\tau$ are put in $U$.

\subsection{Combination of POPPT and PGE}

In both of the methods stated in Sec.~\ref{sec:POPPT},
% stated in~\cite{Takabatake+:SPIRE2014} and \cite{Maruyama+:SPIRE2013},
we finally encode $U$ to a bit sequence.
In previous methods,
$U$ is encoded 
by representing each symbol $U[i]$ by $\lceil\log{(i+|\Sigma|)}\rceil$ bits for $1\le i\le |U|$.
This method is referred to as {\em increasing bit length encoding (IBLE)}.
Here,
note that $U[i]\le i+|\Sigma|$ holds 
since in POPPT, the value of a leaf node is 
at most the number of internal nodes in post-order until the leaf node.

As another method for encoding $U$,
we propose a scheme that uses PGE.
It is expected that PGE performs well for encoding $U$
since
there is a tendency that the values of symbols in $U$ are close to that of their adjacent symbols.

\section{Experiments}\label{sec:experiments}

We implemented \rp, \mrrp, and \rlmrrp
and conducted experiments
to compare their performances.
The experiments involved measuring the sizes of grammars,
the execution time and the peak memory usage for grammar construction,
the sizes of the final compressed files,
and the execution time for encoding the grammars.
We used the datasets listed in Table~\ref{table:datasets} for testing.
All datasets were obtained from Repetitive Corpus produced in Pizza\&Chili Corpus~\cite{pizza_chili}.
All tests were conducted on Intel(R) Core i7-7800X 3.50GHz 12core with 64GB RAM.
The OS was Linux (Ubuntu 16.04.2, 64bit) running kernel 4.15.0.
All the programs were compiled by \texttt{rustc} version 1.35.0 with \texttt{--release} option.

\begin{table*}[t]
\begin{center}
\caption{
\small
	Datasets used in experiments.
	Here, $|\Sigma|$ is the alphabet size, representing the number of kinds of symbols that occur in each dataset.
	Type implies classification by the scheme used to generate the dataset;
	artificially created symbol sequences (A), artificially generated by adding repetitiveness to real data (PR), and
	real repetitive data (R).
}
\vspace{1ex}
\label{table:datasets}
\begin{tabular}{l@{\quad}r@{\quad}r@{\quad}r@{\quad}l}
	\hline
	\multicolumn{1}{l}{Name} & \multicolumn{1}{r}{Size (bytes)} & \multicolumn{1}{r}{$|\Sigma|$} 
	& \multicolumn{1}{r}{Type} & \multicolumn{1}{l}{Description}                      \\
	\hline
	fib41           & 267,914,296   & 2     & A   & ~~Fibonacci string \\
	dna.001.1       & 104,857,600   & 5     & PR  & ~~100 $\times$ 1MiB prefix of human genome \\
	sources.001.2   & 104,857,600   & 98    & PR  & ~~100 $\times$ 1MiB prefix of Linux and GCC sources \\
	coreutils       & 205,281,778   & 236   & R   & ~~9 versions of GNU Coreutils source \\
	einstein.en.txt & 467,626,544   & 139   & R   & ~~Edit history of Wikipedia for Albert Einstein \\
	influenza       & 154,808,555   & 15    & R   & ~~78,041 DNA sequences of Haemophilus Influenzae \\
	para            & 429,265,758   & 5     & R   & ~~36 DNA sequences of Saccharomyces Paradoxus \\
	world\_leaders  & 46,968,181    & 89    & R   & ~~CIA World Leaders from Jan. 2003 to Dec. 2009 \\
	\hline
\end{tabular}
\end{center}
\end{table*}

\subsection{Grammar construction}

Table~\ref{table:grammar} shows the sizes of grammars constructed 
by \rp, \rpps, \mrrp and \rlmrrp,
the execution time, and the peak memory usage for grammar construction.
\rpps is a variant of \rp stated in~\cite{Bille+:DCC2017}\footnotemark[2],
which partially sorts the rules of grammar in the procedure.
In any case,
\mrrp and \rlmrrp outperform both \rp and \rpps,
while it was stated in~\cite{Furuya+:DCC2019} that
there is a case in which \mrrp is theoretically inferior to \rp.
In particular,
for all datasets except \texttt{coreutils},
\rlmrrp constructs the smallest grammars.
In addition, as shown in the table,
\rlmrrp runs faster than \mrrp in all cases except sources.001.2.

\begin{table*}[t]
\begin{center}
\caption{
\small
	The sizes of the generated grammars,
	the execution time, and
	the peak memory usage for grammar construction.
	Let $G = \{\Sigma,V,s,R\}$ be a constructed grammar,
	where $\Sigma = \{a_1,\cdots,a_{\sigma}\}$,
	$V = \{1,\cdots,(\sigma+d+1)\}$, $s=(\sigma+d+1)$, and
	$R = \{1\rightarrow a_1,\cdots,\sigma\rightarrow a_{\sigma}, 
	(\sigma+1)\rightarrow\alpha_1,\cdots,(\sigma+d)\rightarrow\alpha_{d},
	(\sigma+d+1)\rightarrow \tau\}$.
	Then, from the top row,
	each cell in the table represents $d$, $(\sum_{i=0}^{d}|\alpha_i|)$, $|\tau|$, and the size of the grammar for $G$.
	The fifth and sixth rows separated by a line represent 
	the average running time of five executions with seconds and
	the average peak memory usage of five executions with kilobytes (kB), respectively.
}
\vspace{2ex}
\label{table:grammar}
% \footnotesize
\fontsize{9}{11}\selectfont
\begin{tabular}{l|rrrr} \hline
	& \multicolumn{1}{l}{\rp} & \multicolumn{1}{l}{\rpps} 
	& \multicolumn{1}{l}{\mrrp} & \multicolumn{1}{l}{\rlmrrp} \\ \hline
fib41			& \textbf{38} & \textbf{38} & \textbf{38} & \textbf{38} \\
				& \textbf{76} & \textbf{76} & \textbf{76} & \textbf{76} \\
				& \textbf{ 3} & \textbf{ 3} & \textbf{ 3} & \textbf{ 3} \\
				& \textbf{81} & \textbf{81} & \textbf{81} & \textbf{81} \\
\cline{2-5}		& \textbf{67.163}	& 67.183	& 81.341	& 81.162	\\
				& \textbf{18,122,200}	& 18,122,220	& 18,921,276	& 18,921,412	\\
\hline
dna.001.1		& 261,023	& 261,239	& 223,983	& \textbf{223,612}	\\
				& 522,046	& 522,478	& 485,514	& \textbf{485,251}	\\
				& 498,612	& 498,402	& 496,566	& \textbf{494,406}	\\
				& 1,020,663	& 1,020,885	& 982,085	& \textbf{979,662}	\\
\cline{2-5}		& \textbf{61.787}	& 61.798	& 70.493	& 67.244	\\
				& 7,684,652	& 7,685,988	& \textbf{7,660,588}	& 7,660,980	\\
\hline                                                          
sources.001.2   & 709,174	& 709,052	& 400,258	& \textbf{400,213  }	\\
				& 1,418,348	& 1,418,104	& 1,109,686	& \textbf{1,109,548}	\\
				& 183,656	& 183,583	& 181,393	& \textbf{181,253  }	\\
				& 1,602,102	& 1,601,785	& 1,291,177	& \textbf{1,290,899}	\\
\cline{2-5}		& \textbf{64.770}	& 64.924	& 68.147	& 69.077	\\
				& 7,708,468	& 7,708,868	& 7,620,056	& \textbf{7,619,772}	\\
\hline
coreutils       & 1,833,094	& 1,833,918	& 436,515	& \textbf{436,443}	\\
				& 3,666,188	& 3,667,836	& \textbf{2,269,133}	& 2,269,393	\\
				& 154,036	& 154,001	& 153,622	& \textbf{153,611}	\\
				& 3,820,460	& 3,822,073	& \textbf{2,422,991}	& 2,423,240	\\
\cline{2-5}		& \textbf{122.824}	& 124.309	& 137.012	& 128.219	\\
				& 15,529,364	& 15,529,764	& 15,226,140	& \textbf{15,225,932}	\\
\hline
einstein.en.txt & 100,681	& 100,641	& 49,373	& \textbf{49,221 }	\\
				& 201,362	& 201,282	& \textbf{150,105}	& 150,173	\\
				& 62,492	& 62,580	& 62,318	& \textbf{62,096 }	\\
				& 263,993	& 264,001	& 212,562	& \textbf{212,408}	\\
\cline{2-5}		& 294.028	& \textbf{293.325}	& 323.260	& 320.619	\\
				& 25,181,396	& 25,181,724	& 24,741,612	& \textbf{24,735,416}	\\
\hline
influenza       & 659,560	& 659,473	& 427,595	& \textbf{423,419  }	\\
				& 1,319,120	& 1,318,946	& 1,088,157	& \textbf{1,077,405}	\\
				& 897,431	& 898,010	& 894,544	& \textbf{887,131  }	\\
				& 2,216,566	& 2,216,971	& 1,982,716	& \textbf{1,964,551}	\\
\cline{2-5}		& \textbf{87.705}	& 87.819	& 103.473	& 98.723	\\
				& 13,240,848	& 13,242,928	& 13,228,356	& \textbf{13,109,972}	\\
\hline
para			& 3,076,152	& 3,077,085	& \textbf{1,079,287}	& 1,082,467	\\
				& 6,152,304	& 6,154,170	& 4,157,167	& \textbf{4,145,790}	\\
				& 1,142,696	& 1,142,356	& 1,134,361	& \textbf{1,121,371}	\\
				& 7,295,005	& 7,296,531	& 5,291,533	& \textbf{5,267,166}	\\
\cline{2-5}		& \textbf{248.267}	& 249.652	& 279.212	& 260.333	\\
				& 32,160,672	& 32,165,360	& \textbf{31,602,712}	& 31,603,136	\\
\hline
world\_leaders  & 209,071	& 209,079	& 99,910	& \textbf{98,078 }	\\
				& 418,142	& 418,158	& 309,031	& \textbf{306,091}	\\
				& 98,127	& 98,210	& 97,712	& \textbf{94,851 }	\\
				& 516,358	& 516,457	& 406,832	& \textbf{401,031}	\\
\cline{2-5}		& 20.214	& 20.427	& 23.553	& \textbf{18.586}	\\
				& 4,222,924	& 4,222,713	& \textbf{4,163,532}	& 4,164,368	\\
\hline
\end{tabular}
\end{center}
\end{table*}

\footnotetext[2]{
	While a program of the algorithm implemented by the authors is available in~\cite{Prezza:github},
	we implemented by ourselves and used for testing
	for fairness of comparison
	(the implementation of~\cite{Prezza:github} mainly aims to reduce the memory usage
	and the performance of the execution time slightly decreases for the purpose).
}

\subsection{Encoding the grammars}

Tables~\ref{table:encoding_rp}, \ref{table:encoding_mrrp} and \ref{table:encoding_rlmrrp} 
show the sizes of files compressed by \rp, \mrrp, and \rlmrrp, respectively.
We tested 8 encoding methods for \rp
and 6 encoding methods for \mrrp and \rlmrrp.
Each table shows the execution time of each encoding method.

\begin{table*}[t]
\begin{center}
\vspace{-6ex}
\caption{
\small
	The sizes of files compressed by \rp and the execution time for encoding the grammars.
	In each cell, the first row represents the size (bytes),
	whereas the second row with parentheses represents the compression ratio
	(compressed file size)$/$(input file size)$\times 100$~($\%$).
	The third row separated by a line represents the average running time of five executions with seconds.
}
% \vspace{1ex}
\label{table:encoding_rp}
\footnotesize
\begin{tabular}{l|rrrrrrrr} \hline
& \multicolumn{8}{l}{\rp} \\ \cline{2-9}
	& \multicolumn{1}{l}{\texttt{32bit}} & \multicolumn{1}{l}{\texttt{fble}} & \multicolumn{1}{l}{\texttt{huffman}} 
	& \multicolumn{1}{l}{\texttt{ps+pge6}} & \multicolumn{1}{l}{\texttt{ps+pge8}} 
	& \multicolumn{1}{l}{\texttt{poppt+ible}} 
	& \multicolumn{1}{l}{\texttt{poppt+pge6}} & \multicolumn{1}{l}{\texttt{poppt+pge8}} \\ \hline
fib41			& 327 & 85 & 251 & 83 & 84 & \textbf{50} & 69 & 71 \\
				& (0.0000) & (0.0000) & (0.0000) &	(0.0000) &	(0.0000) &	\textbf{(0.0000)} &	(0.0000) & (0.0000) \\ 
\cline{2-9}		& \textbf{0.000}	& \textbf{0.000}	& \textbf{0.000}	& \textbf{0.000}	& \textbf{0.000}	& \textbf{0.000}	& \textbf{0.000}	& \textbf{0.000}	\\
\hline
dna				& 4,082,646	& 2,296,506	& 3,211,576	& 1,783,251	& \textbf{1,778,453}	& 1,957,954	& 1,906,365	& 1,911,272	\\
~~.001.1		& (3.8935)	& (2.1901)	& (3.0628)	& (1.7006)	& \textbf{(1.6961)}	& (1.8673)	& (1.8181)	& (1.8277)	\\
\cline{2-9}		& 0.085	& 0.119	& 0.627	& \textbf{0.078}	& \textbf{0.078}	& 0.156	& 0.166	& 0.162	\\
\hline
sources			& 6,408,123	& 4,005,269	& 6,713,711	& 2,606,298	& 2,604,665	& \textbf{2,324,485}	& 2,342,488	& 2,334,697	\\
~~.001.2		& (6.1113)	& (3.8197)	& (6.4027)	& (2.4856)	& (2.4840)	& \textbf{(2.2168)}	& (2.2340)	& (2.2265)	\\
\cline{2-9}		& \textbf{0.148}	& 0.214	& 1.438	& 0.169	& 0.167	& 0.272	& 0.286	& 0.283	\\
\hline
coreutils       & 15,281,141 	& 10,028,722	& 17,020,682	& 5,655,042	& 5,657,054	& \textbf{5,451,520}	& 5,469,724	& 5,461,667	\\
				& (7.4440)	& (4.8853)	& (8.2914)	& (2.7548)	& (2.7558)	& \textbf{(2.6556)}	& (2.6645)	& (2.6606)	\\
\cline{2-9}		& \textbf{0.329}	& 0.533	& 3.867	& 0.420	& 0.394	& 0.688	& 0.714	& 0.710	\\
\hline
einstein		& 1,055,564	& 560,999	& 938,266	& 439,083	& 441,650	& 375,523	& \textbf{374,902}	& 374,938	\\
~~.en.txt		& (0.2257)	& (0.1200)	& (0.2006)	& (0.0939)	& (0.0944)	& (0.0803)	& \textbf{(0.0802)}	& (0.0802)	\\
\cline{2-9}		& \textbf{0.024}	& 0.031	& 0.165	& 0.027	& 0.027	& 0.037	& 0.039	& 0.038	\\
\hline
influenza       & 8,866,228	& 5,541,429	& 7,813,734	& 4,140,255	& \textbf{4,137,727}	& 4,214,266	& 4,201,129	& 4,198,308	\\
				& (5.7272)	& (3.5795)	& (5.0474)	& (2.6744)	& \textbf{(2.6728)}	& (2.7222)	& (2.7138)	& (2.7119)	\\
\cline{2-9}		& \textbf{0.189}	& 0.295	& 1.603	& 0.203	& 0.199	& 0.387	& 0.409	& 0.402	\\
\hline
para			& 29,180,014	& 20,061,278	& 30,566,314	& 11,812,763	& 11,893,263	& 12,135,356	& \textbf{11,710,363}	& 11,759,392	\\
				& (6.7977)	& (4.6734)	& (7.1206)	& (2.7519)	& (2.7498)	& (2.8270)	& \textbf{(2.7280)}	& (2.7394)	\\
\cline{2-9}		& \textbf{0.631}	& 1.053	& 7.193	& 0.723	& 0.712	& 1.423	& 1.460	& 1.455	\\
\hline
world\_			& 2,065,174	& 1,161,820	& 1,953,121	& 796,540	& 796,666	& 741,111	& 740,316	& \textbf{739,570}	\\
	~~leaders		& (4.3970)	& (2.4736)	& (4.1584)	& (1.6959)	& (1.6962)	& (1.5779)	& (1.5762)	& \textbf{(1.5746)}	\\
\cline{2-9}		& \textbf{\textbf{0.046}}	& 0.061	& 0.380	& 0.049	& 0.049	& 0.078	& 0.081	& 0.081	\\
\hline
\end{tabular}
\end{center}
\end{table*}

\begin{table*}[t]
\begin{center}
\vspace{-6ex}
\caption{
\small
	The sizes of files compressed by \mrrp and the execution time for encoding the grammars.
	In each cell, the first row represents the size (bytes),
	whereas the second row with parentheses represents the compression ratio
	(compressed file size)$/$(input file size)$\times 100$~($\%$).
	The third row separated by a line represents the average running time of five executions with seconds.
}
\vspace{1ex}
\label{table:encoding_mrrp}
\footnotesize
\begin{tabular}{l|rrrrrr} \hline
& \multicolumn{6}{l}{\mrrp} \\ \cline{2-7}
	& \multicolumn{1}{l}{\texttt{32bit}} & \multicolumn{1}{l}{\texttt{fble}} & \multicolumn{1}{l}{\texttt{huffman}} 
	& \multicolumn{1}{l}{\texttt{poppt+ible}} 
	& \multicolumn{1}{l}{\texttt{poppt+pge6}} & \multicolumn{1}{l}{\texttt{poppt+pge8}} \\ \hline
fib41			& 429 & 118 & 264 & \textbf{60} & 79 & 78 \\
				& (0.0000) & (0.0000) & (0.0000) & \textbf{(0.0000)} & (0.0000) & (0.0000) \\
\cline{2-7}		& \textbf{0.000}	& \textbf{0.000}	& \textbf{0.000}	& \textbf{0.000}	& \textbf{0.000}	& \textbf{0.000}	\\
\hline
dna.001.1		& 4,824,266	& 2,713,667	& 3,073,145	& 1,918,499	& 1,895,294	& \textbf{1,894,870}	\\
				& (4.6008)	& (2.5880)	& (2.9308)	& (1.8296)	& (1.8075)	& \textbf{(1.8071)}	\\
\cline{2-7}		& \textbf{0.096}	& 0.139	& 0.560	& 0.144	& 0.151	& 0.149	\\
\hline
sources.001.2   & 6,765,455	& 4,017,172	& 4,791,544	& 2,373,197	& 2,343,157	& \textbf{2,335,164}	\\
				& (6.4520)	& (3.8311)	& (4.5696)	& (2.2633)	& (2.2346)	& \textbf{(2.2270)}	\\
\cline{2-7}		& \textbf{0.145}	& 0.217	& 0.933	& 0.215	& 0.219	& 0.217	\\
\hline
coreutils       & 11,437,333 	& 6,791,346	& 7,368,357	& 5,258,079	& 5,115,689	& \textbf{5,106,577}	\\
				& (5.5715)	& (3.3083)	& (3.5894)	& (2.5614)	& (2.4920)	& \textbf{(2.4876)}	\\
\cline{2-7}		& \textbf{0.222}	& 0.363	& 1.427	& 0.393	& 0.403	& 0.394	\\
\hline
einstein.en.txt & 1,047,332	& 523,884	& 626,349	& 371,338	& 363,071	& \textbf{362,624}	\\
			& (0.2240)	& (0.1120)	& (0.1339)	& (0.0794)	& (0.0776)	& \textbf{(0.0775)}	\\
\cline{2-7}		& \textbf{0.021}	& 0.027	& 0.096	& 0.026	& 0.028	& 0.028	\\
\hline
influenza       & 9,641,208	& 5,724,503	& 6,427,285	& 4,123,574	& 4,071,746	& \textbf{4,064,247}	\\
				& (6.2278)	& (3.6978)	& (4.1518)	& (2.6637)	& (2.6302)	& \textbf{(2.6253)}	\\
\cline{2-7}		& \textbf{0.198}	& 0.313	& 1.253	& 0.326	& 0.346	& 0.351	\\
\hline
para			& 25,483,274	& 16,723,417	& 16,887,956	& 12,117,901	& \textbf{11,269,822}	& 11,306,815	\\
				& (5.9365)	& (3.8958)	& (3.9341)	& (2.8229)	& \textbf{(2.6254)}	& (2.6340)	\\
\cline{2-7}		& \textbf{0.502}	& 0.884	& 3.565	& 1.009	& 1.019	& 1.005	\\
\hline
world\_leaders  & 2,026,710	& 1,076,841	& 1,275,874	& 737,552	& 719,313	& \textbf{717,965}	\\
				& (4.3151)	& (2.2927)	& (2.7165)	& (1.5703)	& (1.5315)	& \textbf{(1.5286)}	\\
\cline{2-7}		& \textbf{0.041}	& 0.057	& 0.226	& 0.058	& 0.056	& 0.056	\\
\hline
\end{tabular}
\end{center}
\end{table*}

\begin{table*}[t]
\begin{center}
\vspace{-6ex}
\caption{
\small
	The sizes of files compressed by \rlmrrp and the execution time for encoding the grammars.
	In each cell, the first row represents the size~(bytes),
	whereas the second row with parentheses represents the compression ratio
	(compressed file size)$/$(input file size)$\times 100$~($\%$).
	The third row separated by a line represents the average running time of five executions with seconds.
}
\vspace{1ex}
\label{table:encoding_rlmrrp}
\footnotesize
\begin{tabular}{l|rrrrrr} \hline
& \multicolumn{6}{l}{\rlmrrp} \\ \cline{2-7}
	& \multicolumn{1}{l}{\texttt{32bit}} & \multicolumn{1}{l}{\texttt{fble}} & \multicolumn{1}{l}{\texttt{huffman}} 
	& \multicolumn{1}{l}{\texttt{poppt+ible}} 
	& \multicolumn{1}{l}{\texttt{poppt+pge6}} & \multicolumn{1}{l}{\texttt{poppt+pge8}} \\ \hline
fib41			& 479 & 118 & 264 & \textbf{60} & 79 & 78 \\
				& (0.0000) & (0.0000) & (0.0000) & \textbf{(0.0000)} & (0.0000) & (0.0000) \\
\cline{2-7}		& \textbf{0.000}	& \textbf{0.000}	& \textbf{0.000}	& \textbf{0.000}	& \textbf{0.000}	& \textbf{0.000}	\\
\hline
dna.001.1		& 4,813,090	& 2,707,381	& 3,067,205	& 1,913,276	& 1,889,730	& \textbf{1,889,630}	\\
				& (4.5901)	& (2.5820)	& (2.9251)	& (1.8246)	& (1.8022)	& \textbf{(1.8021)}	\\
\cline{2-7}		& \textbf{0.096}	& 0.144	& 0.551	& 0.144	& 0.150	& 0.150	\\
\hline
sources.001.2   & 6,764,163	& 4,016,405	& 4,789,737	& 2,372,574	& 2,342,202	& \textbf{2,334,317}	\\
				& (6.4508)	& (3.8303)	& (4.5678)	& (2.2627)	& (2.2337)	& \textbf{(2.2262)}	\\
\cline{2-7}		& \textbf{0.141}	& 0.215	& 0.929	& 0.208	& 0.217	& 0.216	\\
\hline
coreutils       & 11,438,025 	& 6,791,756	& 7,367,012	& 5,258,904	& 5,115,868	& \textbf{5,106,824}	\\
				& (5.5719)	& (3.3085)	& (3.5887)	& (2.5618)	& (2.4921)	& \textbf{(2.4877)}	\\
\cline{2-7}		& \textbf{0.222}	& 0.359	& 1.330	& 0.381	& 0.395	& 0.393	\\
\hline
einstein.en.txt & 1,046,112	& 523,274	& 624,555	& 371,298	& 362,760	& \textbf{362,372}	\\
			& (0.2237)	& (0.1119)	& (0.1336)	& (0.0794)	& (0.0776)	& \textbf{(0.0775)}	\\
\cline{2-7}		& \textbf{0.022}	& 0.027	& 0.097	& 0.026	& 0.028	& 0.028	\\
\hline
influenza       & 9,551,844	& 5,671,443	& 6,367,819	& 4,085,808	& 4,033,158	& \textbf{4,025,295}	\\
				& (6.1701)	& (3.6635)	& (4.1134)	& (2.6393)	& (2.6053)	& \textbf{(2.6002)}	\\
\cline{2-7}		& \textbf{0.192}	& 0.301	& 1.223	& 0.318	& 0.334	& 0.329	\\
\hline
para			& 25,398,526	& 16,667,801	& 16,876,589	& 12,039,499	& \textbf{11,203,814}	& 11,240,382	\\
				& (5.9167)	& (3.8829)	& (3.9315)	& (2.8047)	& \textbf{(2.6100)}	& (2.6185)	\\
\cline{2-7}		& \textbf{0.501}	& 0.884	& 3.364	& 0.933	& 0.958	& 0.952	\\
\hline
world\_leaders  & 1,996,178	& 1,060,621	& 1,253,963	& 727,668	& 708,711	& \textbf{707,450}	\\
				& (4.2501)	& (2.2582)	& (2.6698)	& (1.5493)	& (1.5089)	& \textbf{(1.5062)}	\\
\cline{2-7}		& \textbf{0.040}	& 0.057	& 0.200	& 0.055	& 0.056	& 0.056	\\
\hline
\end{tabular}
\end{center}
\end{table*}

\texttt{32bit}, \texttt{fble}, and \texttt{huffman} 
convert a given grammar to text using the procedure introduced in the exordium of Sec~\ref{sec:encoding}
and encode the text by 32-bit encoding, FBLE, and Huffman coding, respectively.
Note that in \rp the number of delimiters is reduced in the converted text
compared with the others.
\texttt{32bit} is the simplest and it reflects the size of the grammar directly.
However, as shown in the tables,
32bit is too large for representing symbols.
In addition,
we observed that \texttt{huffman} tends to be large than \texttt{fble}.
We consider that 
Huffman coding does not perform well
because there are few symbols that occur repeatedly in the converted text.

For comparison, we implemented the encoding method proposed by Bille et al.~\cite{Bille+:DCC2017} 
for \rp (more precisely, for \rpps).
The implementation~\cite{Prezza:github} sets
the constant $\varepsilon$ as 6 (see Def.~\ref{def:PGE}).
However, we found that setting $\varepsilon = 8$ improves the efficiency of the compression for some cases;
thus, we tested for $\varepsilon = 6$ (\texttt{ps+pge6}) and $\varepsilon = 8$ (\texttt{ps+pge8}).
As shown in Table~\ref{table:encoding_rp},
both \texttt{ps+pge6} and \texttt{ps+pge8} significantly improve the efficiency of the compression
compared with methods that convert a given grammar to a text.

Finally,
we tested encoding methods using POPPT for every \rp variant.
For \rp,
we used the method proposed by Maruyama et al.~\cite{Maruyama+:SPIRE2013}
for constructing POPPT.
For \mrrp and \rlmrrp,
we adopted the method proposed by Takabatake et al.~\cite{Takabatake+:SPIRE2014}.
We represented the text in the succinct representation of POCFG in three ways;
using IBLE (\texttt{poppt+ible}),
PGE with $\varepsilon = 6$ (\texttt{poppt+pge6}),
and PGE with $\varepsilon = 8$ (\texttt{poppt+pge8}).
For both \mrrp and \rlmrrp,
the methods using POPPT show high compression efficiency,
particularly, \texttt{poppt+pge8}, which achieved the best compression ratio
in all cases, except fib41 and para
(in para, \texttt{poppt+pge6} was the best
whereas \texttt{poppt+pge8} was second best).
In addition, as shown in Table~\ref{table:encoding_rp},
methods using POPPT are effective even for grammars constructed by \rp.

From the above experiments,
we summarize the result of the best compression performance achieved by each \rp variant
in Table~\ref{table:best}.
For comparison, we also show the compression results by two famous file compressor, 
gzip (version 1.6, with \texttt{-9} option) and bzip2 (version 1.0.6, with \texttt{-9} option) in the table.
As the table shows,
\rp is effective
for artificial datasets (A) and pseudo-real datasets (PR).
In contrast,
for real datasets (R),
\mrrp and \rlmrrp show good performances,
whereas \rlmrrp improves the efficiency of the compression of \mrrp in all cases except coreutils.

\begin{table*}[t]
\begin{center}
\caption{
\small
	The sizes of files compressed by gzip, bzip2, and \rp variants 
	with their best encoding methods with regard to compression performance.
	From the top of the row,
	each cell represents the size~(bytes),
	the compression ratio
	(compressed file size)$/$(input file size)$\times 100$~($\%$),
	and the encoding method.
}
\vspace{1ex}
\label{table:best}
\footnotesize
\begin{tabular}{l|rrrrr} \hline
	& \multicolumn{1}{l}{gzip} & \multicolumn{1}{l}{bzip2}
	& \multicolumn{1}{l}{\rp} & \multicolumn{1}{l}{\mrrp} & \multicolumn{1}{l}{\rlmrrp} \\ \hline
fib41			& 1,176,257	& 14,893	& \textbf{50} & 60 & 60 \\
				& (0.4390)	& (0.0056)	& \textbf{(0.0000)} & (0.0000) & (0.0000) \\
				&			&			& \texttt{poppt+ible} & \texttt{poppt+ible} & \texttt{poppt+ible} \\
% \cline{2-7}		&	&	&	&	&	&	\\
\hline
dna.001.1		& 28,486,029	& 27,385,893	& \textbf{1,778,453}	& 1,894,870	& 1,889,630	\\
				& (27.1664)	& (26.1172)	& \textbf{(1.6961)}	& (1.8071)	& (1.8021)	\\
				&	&	& \texttt{ps+pge8} & \texttt{poppt+pge8} & \texttt{poppt+pge8} \\
% \cline{2-7}		&	&	&	&	&	&	\\
\hline
sources.001.2   & 36,023,271	& 34,619,138	& \textbf{2,324,485}	& 2,335,164	& 2,334,317 \\
				& (34.3545)	& (33.0154)	& \textbf{(2.2168)}	& (2.2270)	& (2.2262)	\\
				&	&	& \texttt{poppt+ible} & \texttt{poppt+pge8} & \texttt{poppt+pge8} \\
% \cline{2-7}		&	&	&	&	&	&	\\
\hline
coreutils       & 49,920,838	& 32,892,028	& 5,451,520	& \textbf{5,106,577}	& 5,106,824 \\
				& (24.3182)	& (16.0229)	& (2.6556)	& \textbf{(2.4876)}	& (2.4877)	\\
				&	&	& \texttt{poppt+ible} & \texttt{poppt+pge8} & \texttt{poppt+pge8} \\
% \cline{2-7}		&	&	&	&	&	&	\\
\hline
einstein.en.txt & 163,664,285	& 24,157,362	& 374,902	& 362,624	& \textbf{362,372 }  \\
				& (34.9989)	& (5.1660)	& (0.0802)	& (0.0775)	& \textbf{(0.0775)}	\\
				&	&	& \texttt{poppt+pge6} & \texttt{poppt+pge8} & \texttt{poppt+pge8} \\
% \cline{2-7}		&	&	&	&	&	&	\\
\hline
influenza       & 10,636,889	& 10,197,176	& 4,137,727	& 4,064,247	& \textbf{4,025,295}	\\
				& (6.8710)	& (6.5870)	& (2.6728)	& (2.6253)	& \textbf{(2.6002) }	\\
				&	&	& \texttt{ps+pge8} & \texttt{poppt+pge8} & \texttt{poppt+pge8} \\
% \cline{2-7}		&	&	&	&	&	&	\\
\hline
para			& 116,073,220	& 112,233,085	& 11,710,363	& 11,269,822	& \textbf{11,203,814}	\\
				& (27.0399)	& (26.1454)	& (2.7280)	& (2.6254)	& \textbf{(2.6100)}	\\
				&	&	& \texttt{poppt+pge6} & \texttt{poppt+pge6} & \texttt{poppt+pge6} \\
% \cline{2-7}		&	&	&	&	&	&	\\
\hline
world\_leaders  & 8,287,665	& 3,260,930	& 739,570	& 717,965	& \textbf{707,450 } \\
				& (17.6453)	& (6.9428)	& (1.5746)	& (1.5286)	& \textbf{(1.5062)}	\\
				&	&	& \texttt{poppt+pge6} & \texttt{poppt+pge8} & \texttt{poppt+pge8} \\
% \cline{2-7}		&	&	&	&	&	&	\\
\hline
\end{tabular}
\end{center}
\end{table*}

\section{Conclusion}\label{sec:conclusion}

In this study,
we extended \mrrp to RLCFG and designed a novel variant called \rlmrrp.
In addition, we proposed an encoding scheme for \mrrp and \rlmrrp
and conducted experiments
to compare their performances.
The experimental results show that
in practice \rlmrrp and the proposed encoding scheme achieve high compression performance
for real repetitive datasets.

As stated in Sec.~\ref{sec:introduction},
\rp practically achieves a high compression ratio compared with
the existing grammar compression methods.
However, it requires a large space for working;
this is also the same for \mrrp and \rlmrrp.
For \rp,
reducing the working space has recently been addressed~\cite{Bille+:DCC2017,Sakai+:DCC2019}.
Our future study will explore the development of
space efficient \mrrp/\rlmrrp algorithms.

\clearpage
\bibliographystyle{plain}
\bibliography{reference}

\end{document}